\documentclass[onecolumn,12pt]{IEEEtran}

\usepackage{graphicx}    
\usepackage{hyperref}    
\usepackage{setspace}    
\usepackage{verbatim}    

\usepackage{amsmath}
\usepackage{amsthm}
\usepackage{amssymb} 
\usepackage{color}

\usepackage{tikz}

\usetikzlibrary{
    arrows.meta,
    positioning,
    calc
}

\newcommand{\suchthat}{\; | \;}

 \newcommand{\be}{\begin{equation}}
\newcommand{\ee}{\end{equation}}

\newtheorem{Problem} {Problem}

\newtheorem{Theorem}{Theorem}

\newcommand{\R}{\mathbb R}

\def\TK{\textcolor{magenta}}
\definecolor{darkgreen}{rgb}{0,0.6,0}

\newcommand{\significancestatement}[1]{%
  \begin{center}
    \parbox{0.95\textwidth}{
      \small
      \textbf{ \begin{center}Significance Statement\end{center}} #1
    }
  \end{center}
}

\title{A universal multi-turnpike principle for optimal allocation of translational resources}

\author{Ram Massas, Thomas Kriecherbauer, Lars Gr\"une, Tamir Tuller, and  Michael Margaliot\thanks{RM  and MM   are with the School of ECE, Tel Aviv University, Tel Aviv, Israel~69978. TK and LG are with Mathematical Institut, University of Bayreuth, 95440 Bayreuth, Germany. TT is with the  Faculty of Engineering
and the Edmond J. Safra Center for Bioinformatics,
Tel Aviv University, Israel 69978.
Correspondence:   michaelm@tauex.tau.ac.il    }}

\begin{document}

\maketitle

 \begin{abstract}
mRNA translation   in the cell requires efficient allocation of shared and limited resources including  free  ribosomes, tRNA molecules, and initiation factors across multiple transcripts. 
 Using a network of  dynamic  mathematical models for ribosome flow along the mRNA, we pose the problem of maximizing the total steady-state protein production rate in the cell
 under a shared and
 limited total budget for all translation rates in all the transcripts. 
 We prove  that 
 the optimal solution of this  resource allocation problem 
 admits a multi-turnpike structure:   in each mRNA, the transition rates are high and  nearly uniform along the bulk of the coding region, with lower and varying  rates near   the boundaries of the~mRNA.    Our results are based on 
 the emergence of hierarchical optimality: 
  regardless of how resources are allocated among genes,
    every transcript should internally organize itself in essentially the same way.   This  suggests that to  
  optimize the overall production rate it is sufficient  to regulate the initiation and termination regions  in each transcript.  Remarkably, this universal turnpike structure holds for any number of transcripts, arbitrary transcript lengths, and various optimization criteria.
   This agrees with  observed  conserved translational phenomena, such as codon ramps and initiation-dominated regulation. 
      Our findings may also  provide guidelines for the rational design of intracellular circuits operating under translational control.
 \end{abstract}

\significancestatement{
 Cells face the challenge of allocating limited translational resources among competing genes. We use a mathematical model for large-scale translation in the cell to show that maximizing protein production under shared resource constraints   gives rise to a multi-turnpike structure: transition rates in all transcripts are nearly uniform across most of the coding region, while regulation is concentrated near transcript boundaries. This result provides a systems-level explanation for conserved translational phenomena, such as codon ramps and initiation-dominated regulation. Our results suggest that the ubiquitous codon ramps observed across organisms may not reflect transcript-specific adaptations alone, but rather a universal consequence of optimal allocation of shared translational resources. Our results also yield practical design principles for engineering translational control in synthetic biology.
}
 
 \begin{IEEEkeywords}
 mRNA translation, 
ribosome flow model,  systems biology,
turnpike structure, competition for  shared resources.
\end{IEEEkeywords}

\section{Introduction}

Cells must continuously allocate limited gene-expression resources, including ribosomes, tRNAs, and translation factors, among many  competing transcripts. For example, a   mammalian cell includes approximately $10^5-10^6$ mRNA molecules  \cite{num_mrna_2014} that indirectly compete for shared resources like initiation factors, free  ribosomes, and  tRNA molecules. In principle, optimal resource allocation 
  requires   (1)~sensing various  internal and external conditions;
(2)~determining an appropriate allocation policy;
and (3)~implementing it through complex regulatory mechanisms acting at multiple levels of gene expression. Such resource-allocation challenges are central to cellular growth, adaptation, and survival~\cite{Zeng2021CellularResourceAllocation,TEUSINK2026103391,total_mrna_PNAS2024}. Understanding large-scale resource allocation in the cell 
is also important for 
the rational design     of interconnected translational modules in synthetic biology~\cite{resource-aware2024}.

Revealing global principles related to translation optimization is an important goal that requires 
averaging over a large number of coding regions~\cite{NAR2015,TULLER2010344}, 
  as patterns observed in single genes 
  tend to be blurred by  additional overlapping codes. 
In many organisms,  the mean values of  bio-physical 
features affecting ribosomal speed during translation (e.g., adaptation to the tRNA pool, local mRNA folding, local amino acid charge) display 
a ``turnpike structure'':   
decoding rates in the middle of the coding region tend  to be uniform, whereas the rates in the initiation and termination regions vary, sometimes referred to as ``codon ramps''~\cite{Peeri2020,TULLER2010344,Bahiri2021,Jia_2026,TULLER2011}.
Slower rates along  the ``initiation ramp'' improve translation efficiency by preventing the formation of   
ribosomal ``traffic jams'' along the transcript~\cite{TULLER2010344}.
Frumkin et al.~\cite{pilpel2017}
found that three gene architecture factors: codon decoding   time, 
mRNA structure, and affinity to the anti-Shine
Dalgarno motif contribute to slowing down ribosomes near the 5' end of the transcript (that is, the beginning of the RNA molecule).

Since mRNA translation is the most energy consuming process in the cell and  plays a pivotal role in the control of gene expression~\cite{Roux2012},
it was suggested that these   patterns of decoding rates   
were shaped by evolution so as
to optimize translation efficiency~\cite{Peeri2020,TULLER2010344,Bahiri2021,TULLER2011}.

Here, we    analyze    the problem of how to   distribute  the limited shared resources  across multiple mRNA molecules  in order to optimize the overall protein production in the cell. 
We use a nonlinear dynamical 
model  for   the sequential 
progression   of ribosomes   along the mRNA   called  the ribosome flow model (RFM)~\cite{reuveni2011genome}.
This is a phenomenological model for the unidirectional flow of ``particles'' along an ordered set of~$n$ sites. 
This model  includes a  soft simple exclusion principle,
representing 
the fact that multiple  ribosomes can scan the same mRNA in parallel, 
but cannot`overtake one  another. 
In the RFM,  groups of consecutive codons along the transcript are coarse-grained into~$n$ ordered   sites, and   local biophysical features such as codon usage, cognate tRNA abundance, and mRNA structure 
are encapsulated into a  positive transition rate~$\lambda_i$ for  each site~$i$. 
In particular, the rate~$\lambda_0$ controls the entry rate into the first site, and~$\lambda_n$
controls the exit rate from the last site (i.e., site~$n$).

The RFM with~$n$ sites 
admits a unique steady  state  of
ribosome densities and hence a unique steady state protein
production rate~$R$ 
that depends on the $n+1$ transition rates (but not on the initial ribosome densities). 
The~RFM   has been   used to analyze translation efficiency and   regulation along  a single mRNA molecule~\cite{rfm_sense, rfm_max,  margaliot2012stability,RFM_NEGATIVE_FEEDBACK}.
Competition for ribosomes has been modeled and analyzed 
using a 
network of RFMs interconnected via a  shared 
pool of free  ribosomes~\cite{Raveh2016}. Ribosomes exiting an mRNA return to the pool  and are then redistributed among all transcripts. These studies showed that competition for the finite ribosome pool induces indirect coupling and global resource allocation effects, leading  to nontrivial trade-offs between translation efficiency of individual mRNAs and overall protein production in the network~\cite{Raveh2016,aditi_networks,fierce_compete}.

 A recent paper~\cite{kaminer2026turnpikepropertyeigenvalueoptimization}
considered the problem of maximizing the steady state protein production rate~$R$
  subject to an upper bound on the sum of all the transition rates along the mRNA.  This constraint  reflects a cellular resource allocation principle: since translational resources are limited,   increasing the elongation rate at one position necessarily reduces what can be allocated elsewhere.
    It was shown in~\cite{kaminer2026turnpikepropertyeigenvalueoptimization}
    that   
a unique optimal solution exists and  admits a turnpike structure:     
the ordered list of optimal transition rates      includes three parts with the first and third part relatively short, and the values in the middle part are all approximately equal.

Here, we consider a system comprising $m$ RFMs with potentially different lengths and study the optimal allocation of a fixed total budget of transition rates across all~RFMs. The goal is to maximize a ``utility function''~$F$ that models the overall steady-state protein production in the cell. Let $R_i$
 denote the steady-state production rate of the $i$th RFM. The objective function~$ F(R_1,\dots,R_m)$
 is assumed to be   increasing in each argument, and may otherwise be quite general. A simple example is the total production rate, 
 $F=R_1+\dots+R_m$. 
This formulation gives rise to a hierarchical resource-allocation
  problem. At the local level, increasing a transition rate at a particular site within an RFM necessarily reduces the budget available for the other transition rates in the same transcript. At the global level, allocating more resources to one transcript reduces the budget available to all other transcripts in the system. Thus, every increase in a transition rate entails both an intra-transcript trade-off and an inter-transcript trade-off. This captures the biological reality that translational resources are shared and limited, and must be distributed both along individual mRNAs and across multiple genes competing for expression.

Our main finding   is that an optimal solution for the entire network  
of~RFMs  exists and admits a \emph{multi-turnpike property}: the list of ordered transition rates in \emph{each} RFM admits a turnpike structure. 
  This result has a natural biological interpretation. Uniform elongation rates reduce the formation of local bottlenecks, enabling a smooth flow of ribosomes and efficient use of resources. In contrast, the boundary regions play distinct regulatory roles. 
  
These results provide a system-level explanation for widely observed conserved features of gene sequences and translation regulation.  It is well-known that 
  initiation is often rate-limiting and highly regulated, 
  and there is a growing interest in translational control near  stop codons~\cite{Jia_2026,TEODOROWICZ2026169765}.  
Termination pausing is usually related to translational readthrough  and  ribosome recycling, but our analysis suggests that it may also increase protein translation efficiency.

\section{The local optimization problem in a single transcript}

Consider a single RFM of length~$n>1$
with non-negative
transition rates~$\lambda=\begin{bmatrix} \lambda_0&\lambda_1&\dots&\lambda_n\end{bmatrix}^\top$,  and let~$R=R(\lambda)$
denote the 
steady state production rate (more details on the~RFM are given in the Supplementary Material).
Pose  the problem of optimizing~$R(\lambda)$
 subject to the constraint
 \[
 \sum_{i=0}^n\lambda_i\leq n+1.
 \]
 The total bound on the transition rates implies that if sites get higher transition rates then others must get less. 
Thus, this optimization problem       balances resource allocation  across  the   different sites along a single transcript.
The total bound on the transition rates scales with the length 
of the RFM, as an RFM with~$n$ sites has~$n+1$ transition rates. Note that the choice $\lambda_i = 1 $ for all~$i$  satisfies the constraints, but this is not an
optimal solution.

 A unique  optimal solution~$\bar\lambda$ exists and is known~\cite{min_spring}  to be symmetric with respect to the middle of the chain, that is,  
\be\label{eq:symm}
\bar \lambda_i=\bar\lambda_{n-i},
\quad i=0,\dots,n,
\ee
and strictly increasing up to the middle of the chain, that is, 
\be\label{eq:lam_mono}
\bar\lambda_0<\bar\lambda_1<\dots<\bar\lambda_{\lfloor n/2 \rfloor}.
\ee
Furthermore, the solution~$\bar \lambda
 $ admits a turnpike structure~\cite{kaminer2026turnpikepropertyeigenvalueoptimization}: for any~$n\geq 36$, there exists a value~$\bar\sigma=\bar\sigma(n)>0$ such that the optimal transition rates satisfy 
\be\label{eq:36}
0<\frac{4}{\bar\sigma ^2}-\bar \lambda_i<\frac1{2^i} \text{ for all } i=0,1,\dots,\lfloor n/2\rfloor.
\ee
Combining this with~\eqref{eq:symm} shows that in the bulk the optimal rates are nearly uniform and   approximately equal   to the value~$4/\bar \sigma^2$. 
     \begin{figure}[t]
\begin{center}
\includegraphics[scale=0.8]{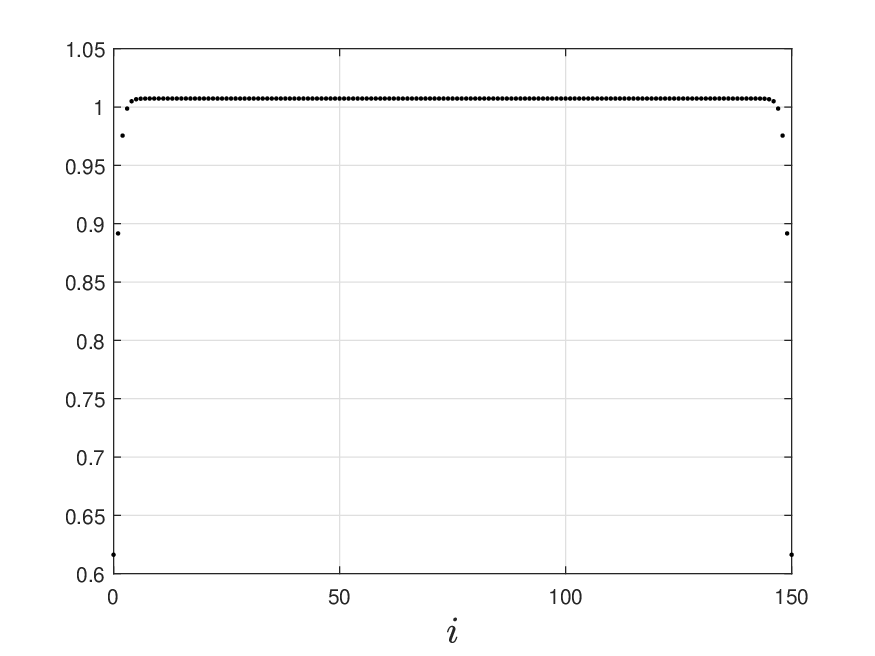}
  \caption{Optimal transition rates~$\bar\lambda_i$
  as a function of~$i$ for~$n=150$. In the bulk all the optimal rates are approximately equal to the value~$\frac{4}{\bar\sigma^2}\approx 1.0073$. }
  \label{fig:opt150}
\end{center}
\end{figure}
Fig.~\ref{fig:opt150} shows the optimal transition rates for the case of an~RFM with~$n=150$ sites. 

The structure of the model implies that if we change the constraint
\[ \sum_{i=0}^n\lambda_i\leq n+1 \quad \mbox{to} \quad \sum_{i=0}^n\lambda_i\leq \rho(n+1) \]
for some $\rho>0$, then the optimal solution changes from $\bar\lambda$ to $\rho\bar\lambda$, i.e., the optimal rates scale linearly with the constraints. 

\section{Multi-turnpike structure in large-scale translation in the cell}\label{sec:multi}
Analyzing a 
single~RFM provides valuable insights into the translation dynamics along an individual transcript  such as the effects of initiation, elongation, and local bottlenecks.
However, cellular protein synthesis is   a system-level process involving many mRNAs that compete for shared resources, including    free ribosomes and initiation factors. Consequently, to capture large-scale translation in the cell, it is necessary to consider networks of coupled RFMs, where interactions between transcripts give rise to collective effects such as resource allocation, competition, and global regulation.  

Here,
we pose a   global  optimization problem for~$m$   RFMs of different lengths, coupled
through a constraint on all the translation rates in all the RFMs, 
and show that the optimal solution of the global problem  satisfies a \emph{multi-turnpike structure}. 
To simplify the presentation and the 
notation, we consider the case~$m=2$, that is, 
 two~RFMs. In Appendix A, we describe the extension of the model and the results to the case of a general~$m$.

 Consider~$m=2$ RFMs, 
 the first  with length~$n_1$ and rates~$\lambda_0,\dots,\lambda_{n_1}$, and the second with length~$n_2$ and rates~$\eta_0,\dots,\eta_{n_2}$.
Let~$R_1=R_1(\lambda)$ [$R_2=R_2(\eta)$]
denote the steady state production rate in the first [second]~RFM. We model the total production rate in the cell
using a continuous utility  function~$F(R_1(\lambda),R_2(\eta))$ that is   increasing in~$R_1(\lambda)$ and in~$R_2(\eta)$,  and satisfies an additional technical 
condition described in the Supplementary Material. 
This general formulation encompasses a wide range of biologically meaningful objective functions.  Examples include:
\begin{itemize}
\item 
$F_1(R_1(\lambda),R_2(\eta)):=w_1 R_1(\lambda)+w_2 R_2(\eta)$,
where~$w_1,w_2$ are positive weights. For example, if   the transcripts 
encode the same protein $P$, and~$w_1=w_2=1$ then this function is the total production rate of~$P$ in the cell.
\item $F_2(R_1(\lambda),R_2(\eta)) : =w_1 \ln(1+R_1(\lambda))+w_2\ln(1+R_2(\eta))$, 
where~$w_1,w_2$ are positive weights. 
This function is similar to $F_1$,
but with the~$\ln(\cdot)$ function corresponding  to a saturation effect.
\item 
$F_3(R_1(\lambda),R_2(\eta))
:=R_1(\lambda) R_2(\eta)$. This is   biologically  relevant when 
the two transcripts encode two different proteins~$P_1$ and~$P_2$ that 
bind to form an active complex.
Then mass-action kinetics gives that the rate of complex formation
is proportional to~$R_1(\lambda) R_2(\eta)$. 
\item $F_4(R_1(\lambda),R_2(\eta)) : =\min\{R_1(\lambda),R_2(\eta)\}$. This is meaningful when the two proteins are used in an upstream process that is limited by the scarcer component.
\end{itemize}

We pose  the following optimization problem.
\begin{Problem}\label{prob:max_global}
Maximize~$F(R_1(\lambda),R_2(\eta))$
    subject to the constraints:
\be\label{eq:constr_multi}
\lambda_i\geq  0, \eta_j\geq 0 \text{ for all } i,j,  \text{ and}  
\sum_{i=0}^{n_1}\lambda_i+
\sum_{j=0}^{n_2}\eta_j \leq n_1+1+n_2+1.
\ee
\end{Problem}

Thus, 
 the two independent RFMs are  only
 coupled
  via the constraint on the total ``budget''~$n_1+n_2+2$ for all the transition rates in all the transcripts. If more of the budget is allocated to the first transcript, then less is left to the second transcript. This encapsulates  for  example   the case where the total budget corresponds to free ribosomes, or tRNA molecules,
  that are a shared and limited resource. 
Problem~\ref{prob:max_global} thus     balances resource allocation both locally, across sites within each mRNA, and globally, across different~mRNAs.

 Note that the choice~$\lambda_i=1$ for all~$i$,
  and~$\eta_j=1$ for all~$j$ satisfies the constraints in~\eqref{eq:constr_multi}, but this is not an optimal solution.

Since the function~$F$ is continuous and the constraint~\eqref{eq:constr_multi} defines a closed and bounded feasible set, Problem~\ref{prob:max_global} admits at least one optimal solution~$(\bar\lambda,\bar\eta)$.

We can now state our main result revealing a universal turnpike principle for translation. 
\begin{Theorem}
\label{thm:main}
Let~$(\bar\lambda,\bar\eta)$
be an optimal solution of Problem~\ref{prob:max_global}. Denote
\begin{align*}
    \bar s_1&:=\sum_{i=0}^{n_1}\bar\lambda_i,\\
    \bar s_2&:=\sum_{j=0}^{n_2}\bar\eta_j. 
\end{align*}
Then~$\bar \lambda$ is the solution of the problem~$\max R(\lambda)$ under the constraint~$\lambda_i\geq 0$  for all~$i$ and~$\sum_{i=0}^{n_1}\lambda_i=\bar s_1$, and~$\bar \eta$ is the solution of the problem~$\max R(\eta)$ under the constraint~$\eta_j\geq 0$ for all~$j$ 
and~$\sum_{j=0}^{n_2}\eta_j=\bar s_2$. 
In particular, both~$\bar \lambda$ and~$\bar \eta$ are turnpike solutions. 
\end{Theorem}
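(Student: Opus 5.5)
The proof is an exchange argument: fix how much of the total budget each transcript receives, and show that inside each transcript the optimal profile must then solve the single-transcript problem of the previous section.

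Fix an optimal solution $(\bar\lambda,\bar\eta)$ of Problem~\ref{prob:max_global} and its budgets $\bar s_1,\bar s_2$. We argue by contradiction and start with $\bar\lambda$. Suppose some $\lambda\geq 0$ with $\sum_i\lambda_i=\bar s_1$ satisfies $R_1(\lambda)>R_1(\bar\lambda)$. Then $(\lambda,\bar\eta)$ is feasible for Problem~\ref{prob:max_global}, since its total budget $\bar s_1+\bar s_2$ is unchanged. Because $F$ is increasing in its first argument, we get $F(R_1(\lambda),R_2(\bar\eta))\geq F(R_1(\bar\lambda),R_2(\bar\eta))$. We would like strict inequality here, which needs $F$ to be strictly increasing, but $F_4=\min$ is not strictly increasing in each argument.

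This is the main obstacle. I expect the "additional technical condition" on $F$ in the Supplementary Material to be exactly what resolves it, either strict monotonicity or a mechanism for redistributing budget. My first attempt for non-strict $F$ is to exploit slack. If $R_1(\lambda)>R_1(\bar\lambda)$ with the same budget $\bar s_1$, then by continuity and the scaling law $R(\rho\lambda)=\rho R(\lambda)$ one can instead use $\rho\lambda$ with $\rho<1$ and still have $R_1(\rho\lambda)\geq R_1(\bar\lambda)$. This frees budget $(1-\rho)\bar s_1>0$, which can be given to the second transcript by scaling $\bar\eta$ up. That strictly increases $R_2$, since $R$ is homogeneous of degree one and $R_2(\bar\eta)>0$. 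Hence both arguments weakly increase and one strictly increases. If $F$ is strictly increasing in at least one argument that benefits, for instance under a condition like "increasing both arguments strictly increases $F$", this contradicts optimality. With the technical hypothesis in hand, I conclude that $\bar\lambda$ maximizes $R_1$ on the simplex $\{\lambda\geq0,\ \sum\lambda_i=\bar s_1\}$. The symmetric argument, with the roles of the two transcripts swapped, gives the same for $\bar\eta$. For $F$ that is only weakly increasing, the statement should perhaps be read as "some optimal solution has this form", and the same construction then produces one.

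Next we check that this simplex problem matches the single-transcript problem of the previous section. On one hand, $R$ is increasing in each rate, so the maximum under $\sum\lambda_i\leq\bar s_1$ is attained on the equality. On the other hand, $\bar s_1>0$, because if $\bar s_1=0$ then $R_1=0$, and moving budget to transcript~1 would improve $F$ (this too should be guaranteed by the technical condition). Writing $\bar s_1=\rho_1(n_1+1)$ with $\rho_1>0$, the scaling remark shows that the unique maximizer is $\rho_1$ times the optimal solution for budget $n_1+1$.

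Finally, we transfer the turnpike structure. Uniqueness of the single-RFM optimum, together with \eqref{eq:symm}, \eqref{eq:lam_mono} and \eqref{eq:36}, shows that $\bar\lambda$ is a positive multiple of a turnpike profile. It is therefore symmetric, increasing up to the middle, and uniform up to exponentially small errors in the bulk, provided $n_1\geq 36$. The same holds for $\bar\eta$.
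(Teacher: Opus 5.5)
Your reallocation-by-contradiction argument takes a genuinely different route from the paper's. It works after the repairs below, but only under a stronger hypothesis on $F$ than the paper imposes.

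\textbf{Comparison with the paper's route.} The paper's condition is that every maximizer of $F$ over the triangle $\Delta_{n_1,n_2}=\{R\in[0,\infty)^2 \suchthat R_1/\bar r^s(n_1)+R_2/\bar r^s(n_2)\le N\}$, with $\bar r^s(n)=\bar R^s(n)/(n+1)$, lies on its upper edge. The proof is then a direct bound rather than an exchange:
\begin{itemize}
\item Homogeneity gives $R_1(\bar\lambda)\le \bar s_1\bar r^s(n_1)$ and $R_2(\bar\eta)\le \bar s_2\bar r^s(n_2)$, so the image point lies in $\Delta_{n_1,n_2}$.
\item It is a maximizer of $F$ there, because every point of the triangle is attained by scaled single-RFM optima.
\item The condition puts it on the upper edge, so $N=R_1(\bar\lambda)/\bar r^s(n_1)+R_2(\bar\eta)/\bar r^s(n_2)\le \bar s_1+\bar s_2\le N$. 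Hence both bounds are tight, and uniqueness of the single-RFM maximizer finishes.
\end{itemize}
Your hypothesis (raising both arguments strictly raises $F$) implies the paper's condition, since an interior maximizer could be pushed by $(\epsilon,\epsilon)$. It also covers $F_1$--$F_4$. But it is strictly stronger: for example, $F=\max\{R_1+R_2-c,0\}$ with small $c>0$ satisfies the paper's condition but not yours.

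Under the paper's condition your exchange does not close by itself. To reach a contradiction you must know that the optimal image point maximizes $F$ on all of $\Delta_{n_1,n_2}$, and that is precisely the paper's argument. What your route buys is that it needs only homogeneity and monotonicity of $R$, never the quantity $\bar r^s(n)$. What the paper's route buys is the weaker hypothesis, plus a simultaneous reduction of Problem~\ref{prob:max_global} to maximizing the one-variable function $f(q)$.

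\textbf{Repairs needed inside your argument.}
\begin{itemize}
\item The claim $R_2(\bar\eta)>0$ is unjustified. Inside the contradiction you know nothing about $\bar\eta$: it may be $0$ (if $\bar s_2=0$) or have a zero entry, and then scaling it up leaves $R_2=0$. Scale $\bar\eta$ up only when $R_2(\bar\eta)>0$; otherwise replace it by any strictly positive vector with the enlarged total. Also choose $\rho$ with $\rho R_1(\lambda)>R_1(\bar\lambda)$, so that both coordinates increase strictly, as your hypothesis requires.
\item Your claim that $\bar s_1>0$ is false, and neither your hypothesis nor the paper's guarantees it. For $F_1$ with $w_1\bar r^s(n_1)<w_2\bar r^s(n_2)$, the unique optimum gives zero budget to the first transcript; the paper's example $n_1=50$, $n_2=45$, $F=R_1+R_2$ has $R_1(\bar\lambda)=0$. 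You do not need the claim, though. If $\bar s_1=0$, the constraint set $\{\lambda\ge 0,\ \sum_i\lambda_i=0\}$ is $\{0\}$, so $\bar\lambda=0=0\cdot\bar\lambda^s(n_1)$ solves it trivially, and the turnpike conclusion is then degenerate.
\end{itemize}
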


 In other words, 
 regardless of how   resources are allocated among genes,
    every transcript should internally organize itself in essentially the same way. The global optimization problem yields a turnpike structure in each mRNA, that is, the transition 
rates in each transcript admit  a spatially homogeneous profile in the bulk, with deviations confined to a small boundary layer.

Figure~\ref{fig:5040}
depicts the optimal solution~$(\bar\lambda,\bar \eta)$
of Problem~\ref{prob:max_global} 
when~$n_1=50$, $n_2=40$, and~$F(R_1(\lambda),R_2(\eta))=R_1(\lambda)R_2(\eta)$. It may be seen that indeed  both~$\bar \lambda$ and~$\bar \eta$ are turnpike solutions.

\begin{figure}[t]
\begin{center}
\includegraphics[scale=0.8]{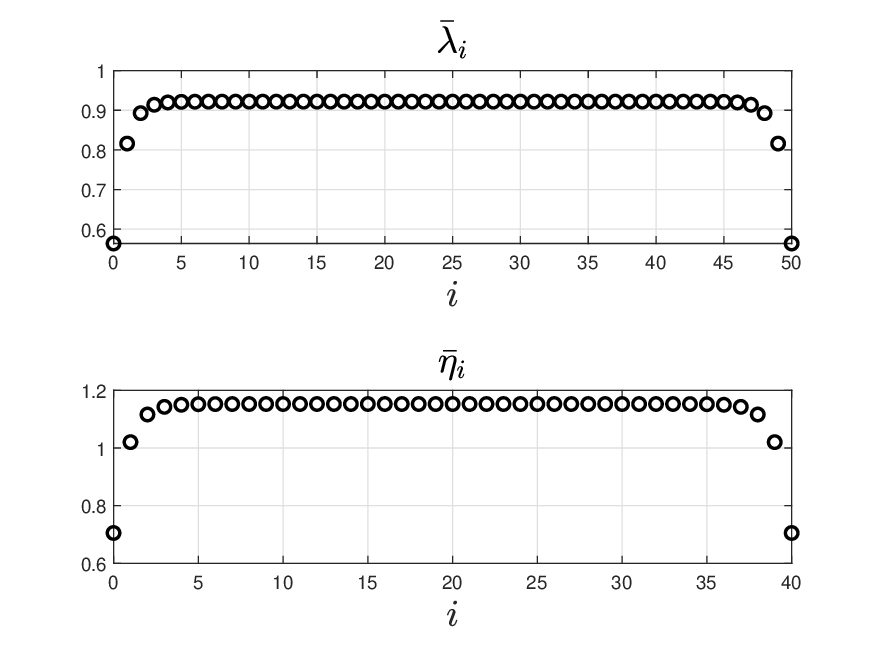}
  \caption{Optimal solution when~$n_1=50$, $n_2=40$, and~$F(\lambda,\eta)=R_1(\lambda)R_2(\eta)$. Top: optimal~$\bar\lambda_i$s  as a function of~$i$. Bottom: optimal~$\bar\eta_i$s  as a function of~$i$.  Here $R_1(\bar\lambda) =0.2304$,
  $
R_2 (\bar\eta)= 0.2882$, 
$\sum_{i=0}^{50} \bar\lambda_i =46$, and~$\sum_{i=0}^{40} \bar\eta_i =46$. }
  \label{fig:5040}
\end{center}
\end{figure}

The proof of the main result, given in the Supplementary Material,
relies on  a decomposition principle: although the optimization problem is global and highly coupled through the  shared resource constraint, every transcript must itself solve the single-transcript optimal allocation problem (see Figure~\ref{fig:hier}).

\begin{figure}[t]
\centering
\begin{tikzpicture}[  
scale=0.7,
    transform shape,
node distance=1.6cm,
every node/.style={font=\large},
box/.style={
draw,
rounded corners,
align=center,
minimum width=4cm,
minimum height=0.9cm,
fill=blue!8},
smallbox/.style={
draw,
rounded corners,
align=center,
minimum width=2.6cm,
minimum height=0.8cm,
 },
]


\node[box] (global)
{\bf Global optimization\\ \bf of translation};

\node[box,below=of global] (budget)
{\bf Budget allocation};

\draw[->,thick] (global)--(budget);


\node[smallbox,below left=1.7cm and 2.2cm of budget] (t1)
{\bf Transcript 1};

\node[smallbox,below=1.7cm of budget] (t2)
{\bf Transcript 2};


\node[smallbox,below right=1.7cm and 2.2cm of budget] (t3)
{\bf Transcript $m$};

\node at ($(t2)!0.5!(t3)$) {$\cdots$};

\draw[->,thick,shorten >=5pt] (budget)--(t1.north);
\draw[->,thick ] (budget)--(t2.north);
\draw[->,thick,shorten >=5pt] (budget)--(t3.north);


\node[box,below=1.8cm of t2,text width=10cm] (local)
{\bf The rates in each  transcript must be optimal\\
given its allocated budget};

\draw[->,thick,shorten >=5pt] (t1)--(local);
\draw[-> ,thick] (t2)--(local);
\draw[->,thick,shorten >=5pt] (t3)--(local);


\node[below left=2.3cm and 1.0cm of local] (p1)
{
\begin{tikzpicture}[scale=.42]
\draw[->] (0,0)--(4.8,0);  
\draw[->] (0,0)--(0,2.3);
\draw[thick]
plot[smooth]
coordinates{
(0,0.4)
(1,1.8)
(3.6,1.8)
(4.6,0.4)};
\end{tikzpicture}
};

\node[below left=3.5cm and  -1cm of local] (p1)
{
{\bf location along transcript 1} 
};

\node[below=2.3cm of local] (p2)
{
\begin{tikzpicture}[scale=.42]
\draw[->] (0,0)--(4.8,0);
\draw[->] (0,0)--(0,2.3);
\draw[thick]
plot[smooth]
coordinates{
(0,0.3)
(1.2,1.9)
(3.5,1.9)
(4.7,0.3)};
\end{tikzpicture}
};

\node[below right=2.3cm and 1cm of local] (p3)
{
\begin{tikzpicture}[scale=.42]
\draw[->] (0,0)--(4.8,0);
\draw[->] (0,0)--(0,2.3);
\draw[thick]
plot[smooth]
coordinates{
(0,0.45)
(1.1,1.75)
(3.7,1.75)
(4.7,0.45)};
\end{tikzpicture}
};

\node at ($(p2)!0.5!(p3)$) {\bf \Large$\cdots$};

\draw[->,thick,shorten >=12pt] (local)--(p1);
\draw[->,thick] (local)--(p2);
\draw[->,thick,shorten >=10pt] (local)--(p3);


\node[box,below=2.2cm of p2]
(final)
{\bf Multi-turnpike structure};

\draw[->,thick,shorten >=6pt,shorten <=5pt] (p1)--(final);

\draw[->,thick,shorten <=10pt] (p2)--(final);

\draw[->,thick,shorten >=5pt,shorten <=28pt] (p3)--(final);

\node[below left=2.3cm and 3.3cm of local] (p1)
{
{\bf rate} 
};

\node[below left=2.3cm and -4.0cm of local] (p2)
{
{\bf rate} 
};
\node[below left=3.5cm and  -8cm of local] (p1)
{
{\bf location along transcript 2} 
};

\node[below left=2.3cm and  -11.3cm of local] (p3)
{
{\bf rate} 
};

\node[below left=3.5cm and  -15.5cm of local] (p2)
{
{\bf location along  transcript m} 
};
\end{tikzpicture}
\caption{Global optimization divides  the total 
available translational budget among the~$m $ transcripts. Each transcript must have optimal rates  given its allocated budget, leading to a turnpike profile of transition rates. Together these local solutions give rise to a multi-turnpike structure. }\label{fig:hier}
\end{figure}

Appendix A shows that the multi-turnpike structure of the optimal solution  holds also when the number of RFMs is~$m>2$.

 
\section{Optimal allocation of the total translational budget}

Theorem~\ref{thm:main} shows that the optimal solution for a network of RFMs exhibits a \emph{multi-turnpike structure}, that is, the optimal transition rates in each individual RFM exhibit a turnpike structure. How is the total budget   distributed among 
the~RFMs? For example, can the optimal solution allocate the entire budget to a single~RFM? Conversely, can it allocate exactly half of the budget to each of two~RFMs? The answers depend on both the lengths of the~RFMs and the specific choice of the utility function $F(R_1(\lambda),R_2(\eta))$.
A detailed analysis of these questions is provided in the Supplementary Material; here we highlight several of the main results.

\subsection{Optimizing the sum of the production rates}
Consider the function~$F_1(R_1,R_2)=w_1R_1+w_2R_2$,
and assume that~$w_1=w_2$. 
If the RFMs have the same length, that is,~$n_1=n_2$, then the total budget is~$N:=n_1+1+n_2+1=2n_1 +2 $. 
A solution that allocates a budget of~$q N$ to the first RFM and~$(1-q)N$ to the second, for \emph{any}~$q\in[0,1]$, is optimal, as long as
any RFM maximizes its steady state  production rate for the given budget. This is reasonable as the two~RFMs are identical  and their optimal rates scale linearly with $q$ and $1-q$, respectively. 

 If~$w_1=w_2$ and the RFMs have different lengths we have strong evidence that the optimal allocation of the budget is always to allocate all the budget~$n_1+n_2+2$ to the shorter~RFM (see Figure~\ref{fig:5045}). We have a mathematical proof for this statement if the longer~RFM has at least two more sites than the shorter one (see the Supplementary Material). 

\begin{figure}[t]
\begin{center}
\includegraphics[scale=0.8]{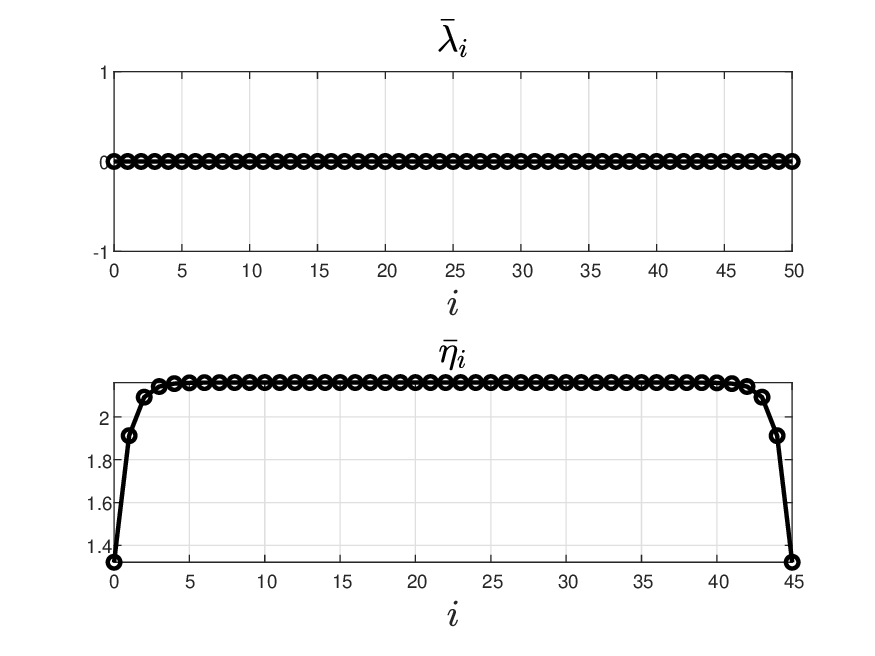}
  \caption{Optimal solution when~$n_1=50$, $n_2=45$, and~$F(R_1,R_2)=R_1+R_2$. Top: optimal~$\bar\lambda_i$s  as a function of~$i$. Bottom: optimal~$\bar\eta_i$s  as a function of~$i$.  Here~$R_1(\bar \lambda)= 0, R_2(\bar \eta) 
=  0.5400 $.  }
  \label{fig:5045}
\end{center}
\end{figure}

\subsection{Optimizing the product
of the production rates}
Consider the utility function~$F_3(R_1,R_2)=R_1R_2$.
In contrast to the objective $R_1+R_2$, which may favor allocating all the  resources to one~RFM at the expense of the other, the product objective strongly penalizes unbalanced allocations: if either
$R_1$ or~$R_2$
 becomes too small, then so does~$F_3$.
 Consequently, the optimal strategy is always to split the total budget equally, allocating one half to each RFM (see Figure~\ref{fig:5040}).

\subsection{Optimizing  the minimum of the  production rates}
Consider the function~$F_4(R_1(\lambda),R_2(\eta)) : =\min\{R_1(\lambda),R_2(\eta)\}$.
Let~$N:=n_1+n_2+2$ denote the total budget. There exists a unique~$\bar q \in[0,1]$ such that the  
optimal solution is   to allocate a proportion~$\bar q $ of the total budget
to the first RFM, and~$(1-\bar q ) $ to the second~RFM. The value~$\bar q$ depends on the lengths~$n_1$ and~$n_2$ of the RFMs, and is equal to~$1/2$ when~$n_1=n_2$. 
  
Summarizing, we see that different cellular objectives induce different translational allocation strategies between the different transcripts.

Figure~\ref{fig:plit10} depicts  the optimal rates in each~RFM in a network of 10 RFM with lengths
\be\label{eq:leng}
300, \, 325,\, 350,\, 375,\, 400,\, 425,\, 450,\, 465,\, 485,\, 500 , 
\ee
and utility function~$F(R_1,\dots,R_{10})=\min\{R_1,\dots, R_{10}\}$.   The total budget is~$301+326+\dots+501=4085$. In the optimal solution,~$\bar R_k=0.25076$ for all~$k$. The reason for this property of the optimal solution can be explained
as follows: if there were different outputs for different transcripts then one could reallocate some of the resources from the more productive RFMs to the less productive RFMs and obtain a larger value for the minimum~$F$. In addition, it
may be seen that
despite the different transcript lengths and different allocated budgets, the optimal rates in every transcript
admit    a  turnpike structure.

 \begin{figure}[t]
\begin{center}
\includegraphics[scale=.55]{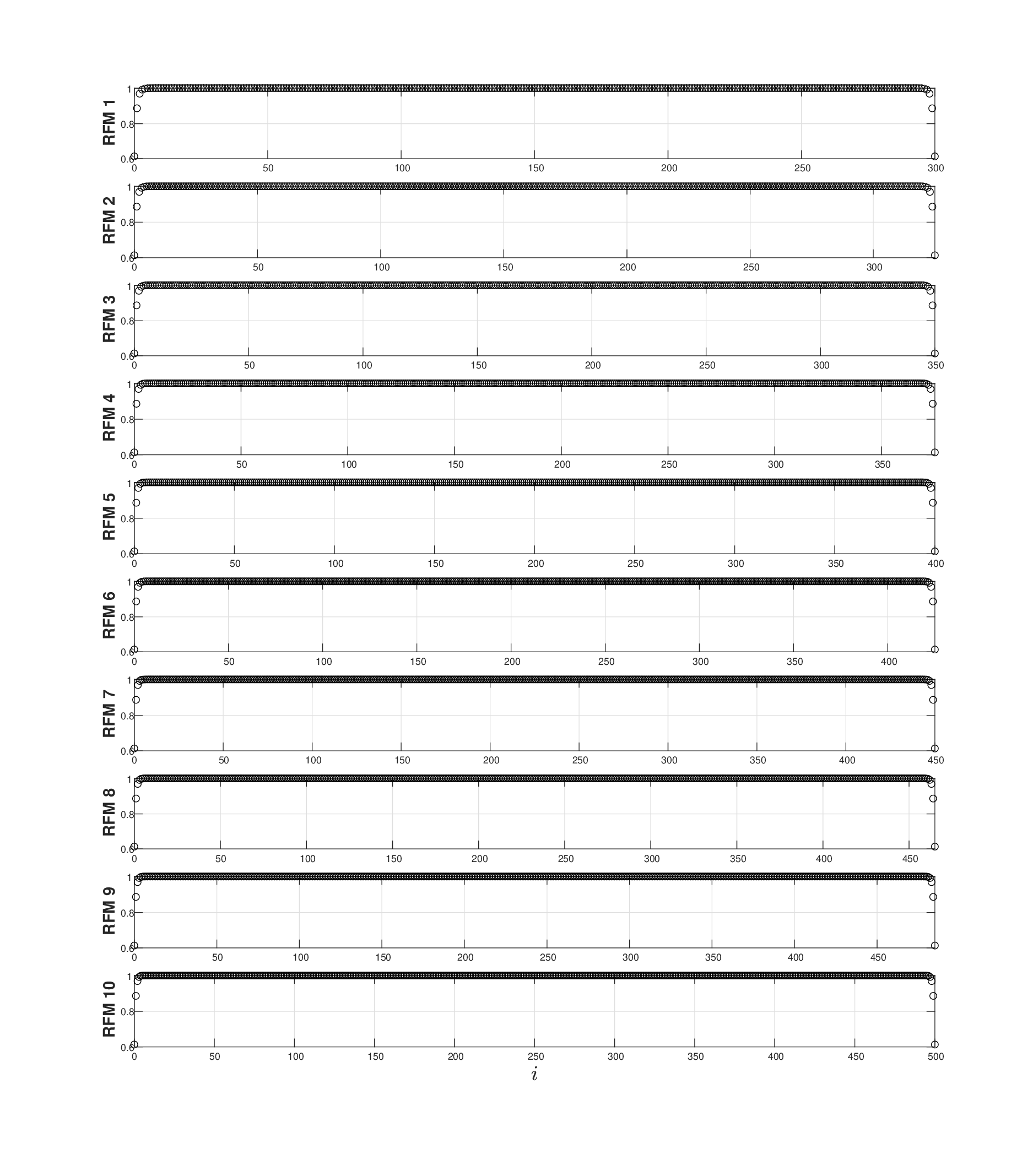}
  \caption{Optimal rates as a function of the location~$i$ along the RFM 
  in a network of $m=10$
  RFMs with  the lengths in~\eqref{eq:leng}.}
\label{fig:plit10}
\end{center}
\end{figure}

\section{Discussion}

Turnpike structures in optimization problems
were first identified in mathematical economics~\cite{Rams28,DoSS58} and later   in optimal control theory, see, for example,~\cite{FauG22,TRELAT201581} and the references therein.
More recently, Kaminer et al.~\cite{kaminer2026turnpikepropertyeigenvalueoptimization}
showed that the optimal allocation of transition rates in a single~RFM
exhibits a turnpike structure. 

The present work extends this concept from a single transcript to a network of competing transcripts sharing a common resource budget. We prove that the optimal solution exhibits a \emph{multi-turnpike structure}, where 
each transcript independently admits  a turnpike profile while the shared resources are optimally distributed across the entire network. The proof of this result uses a 
 hierarchical  optimization  argument showing that inside the 
large-scale networked optimization problem hide
local solutions that must  also be optimal. 
In summary, 
global competition for shared translational resources induces   local turnpike architectures within each transcript.

Our results suggest 
that global translational regulation in the cell  may be implemented through relatively simple molecular mechanisms, primarily   tuning initiation and termination regions of coding sequences, rather than relying on complex sensing and signaling pathways that   continuously regulate the 
elongation rates at individual codon positions along each transcript.  
Furthermore,
 our results suggest that the ubiquitous codon ramps observed across organisms  are 
 a universal consequence of optimal allocation of shared translational resources, providing also a potential explanation for conserved features of translation across species.

More generally, the  multi-turnpike structure may reflect a general principle in biological resource allocation: when optimizing a global objective under a shared constraint \cite{Sabi2019}, optimal strategies tend to equalize marginal returns across most components of the system, with heterogeneity confined to boundaries or interfaces. This principle may apply broadly to other biological
flow-based processes such as transcription \cite{Cohen2018,Edri2014}. 

Finally, our findings may provide practical design principles for synthetic biology. To maximize protein production in intracellular circuits regulated at the translational level, engineering efforts should concentrate on the initiation and termination regions of coding sequences, where modest changes in translation rates can substantially affect protein production. In contrast, when no competing constraints, such as overlapping coding or regulatory functions, are imposed, the interior of the coding sequence should be designed to maintain high and nearly uniform translation rates.

\section{Supplementary Material}
We detail here the proofs of the main results,
and present 
more mathematical definitions and derivations. 

\subsection{Ribosome Flow Model}
The RFM is a nonlinear dynamical compartmental model for the flow of particles (e.g., ribosomes) along an ordered chain of~$n$ sites (e.g., the mRNA coarse-grained into a set of~$n$ consecutive groups of codons). Let~$x_i(t) $ denote the density at site~$i$ at time~$t$. This is normalized such that~$x_i(t)\in[0,1]$ for all~$t$, with~$x_i(t)=0$ implies that site~$i $ is completely free,  and~$x_i(t)=1$ implies that it is completely full. The~RFM also includes~$n+1$ positive parameters~$\lambda_0,\dots,\lambda_n$, where~$\lambda_i$ is the transition rate from site~$i-1$ to site~$i$. In particular, ~$\lambda_0$ is the entry rate to the first site, and~$\lambda_n$ is the exit rate from the last site. 

The dynamic equations of the RFM are
\be\label{eq:rfm_dynam}
\dot x_i  = \lambda_{i-1}x_{i-1}(1-x_i)- \lambda_{i }x_{i }(1-x_{i+1}),\quad i=1,\dots,n.
\ee
Here~$x_0:=1$ and~$x_{n+1}:=0$. The term~$\lambda_{i-1}x_{i-1}(1-x_i)$ describes the flow of particles from site~$i-1$ to site~$i$.
This is proportional to the  rate~$\lambda_{i-1}$, to the density~$x_{i-1}$  of particles in site~$i-1$, and to~$(1-x_i)$, that is, the free space in site~$i$. 
Similarly,  the term~$\lambda_{i }x_{i }(1-x_{i+1})$
 describes the flow of particles from site~$i $ to site~$i+1$, so
 \eqref{eq:rfm_dynam} is just a balance equation stating that the change in the density in a site is the flow into the site minus the flow out of this site.

The nonlinearity of the~RFM arises from the products of state variables in~\eqref{eq:rfm_dynam}, making it a nonlinear system of first-order ordinary differential equations.

 Let~$R(t):=\lambda_n x_n(t)$ denote the flow of particles out of the last site. This is the protein production rate at time~$t$. 
Let~$\lambda:=\begin{bmatrix}
    \lambda_0&\dots&\lambda_n
\end{bmatrix}^\top$ denote the set of positive transition rates. An important property of the RFM is that it admits a unique equilibrium point~$e\in(0,1)^n$, which depends only on the rates i.e.~$e=e(\lambda)$, and for any initial condition~$x(0)\in[0,1]^n $
the solution of the RFM converges to~$e$~\cite{margaliot2012stability}. At the steady-state, that is, when~$x_i=e_i$ for all~$i$, the flow into each site and out of each site are equal, so the densities remain constant.  The steady state does not change if the vector $\lambda$ is multiplied by a scalar $\rho>0$, i.e., $e(\rho\lambda)=e(\lambda)$ for all $\rho>0$. As a consequence, the protein production rate in the steady state scales linearly with $\rho>0$, which implies that the maximal production rate scales linearly with the bound on $\sum_{i=1}^n\lambda_i$. 

Since the convergence to~$e$ is (almost) exponential \cite{Margaliot2016ContractionSmallTransients}, the ribosome densities rapidly approach their steady-state values. Accordingly, after a relatively short transient, the protein production rate is well approximated by its steady-state value,
$R:=\lambda_n e_n$.

\subsection{Utility Function}\label{sub:add_cond_utility_func}
We describe the additional technical condition that the utility function~$F=F(R_1,R_2)$  must satisfy.  
To formulate it, we first introduce some notation. 
Let~$\bar R^s(n)$ denote the maximum of the steady state production rate~$R$ in an~RFM with length~$n$ and  with the constraints~$\lambda_i\geq0 $ for all~$i$ and~$\sum_{i=0}^n \lambda_i \leq n+1 $.
Let~$\bar \lambda^s(n) \in (0, n+1)^{n+1}$ denote the corresponding unique maximizer (which has the turnpike property~\cite{kaminer2026turnpikepropertyeigenvalueoptimization}), 
and let
\be\label{eq:small_r}
\bar r^s(n):=\frac{\bar R^s(n)}{n+1},
\ee
that is, the optimal production rate in the single RFM  scaled by the total budget (see Fig.~\ref{fig:bigsmall_R}).  
The superscript~$s$ denotes that these are  all optimal values for the problem of maximizing the production rate in a 
\emph{single}~RFM.

 \begin{figure}[t]
\begin{center}
\includegraphics[scale=0.8]{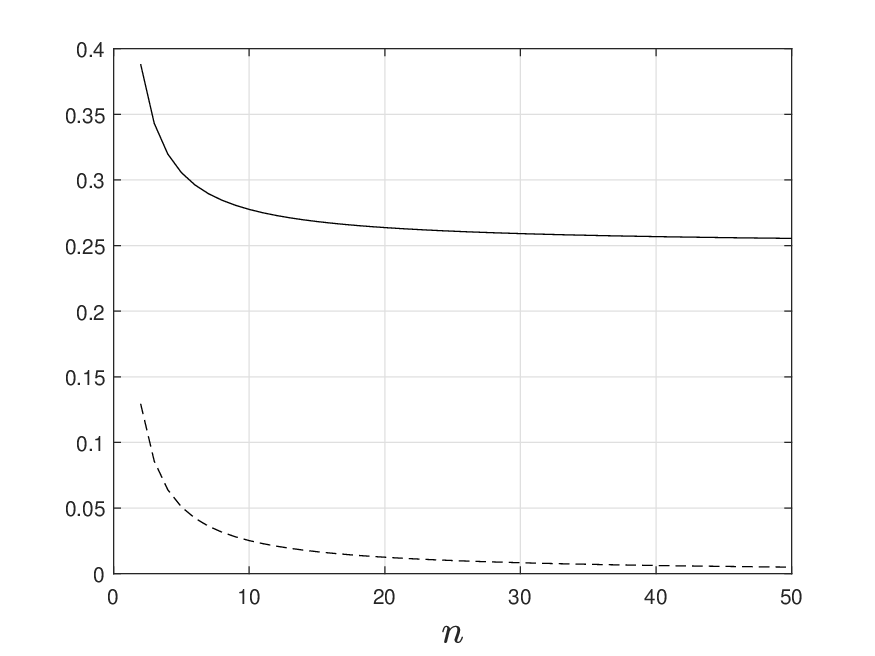}
  \caption{The functions~$\bar R^s(n)$ (solid line) and~$\bar r^s(n)$ (dashed line)
  for~$n=2,3,\dots,50$.  }
\label{fig:bigsmall_R}
\end{center}
\end{figure}

Let
\[
N:=n_1+n_2+2
\]
denote the total budget.  
Define the solid triangle:  
\[
\Delta_{n_1, n_2} := \{R \in [0, \infty)^2 
\suchthat 
 \frac{R_1}{\bar r^s (n_1)} + \frac{R_2}{\bar  r^s ( n_2)} \leq N\} . 
\]
The additional technical condition on the utility function~$F$ is that all maximizers of~$F$ restricted to the triangle $\Delta_{n_1, n_2}$ are located on the upper boundary $\{R \in [0, \infty)^2 \suchthat  \frac{R_1}{\bar r^s(n_1)} + \frac{R_2}{\bar r^s(n_2)} =N \}$.
Intuitively speaking, 
this means that  an 
optimal solution always uses all the cellular budget~$N$. 
If~$F(R_1,R_2)$ is strictly increasing  in~$R_1,R_2$ then this condition  automatically holds. This implies that 
the functions~$F_1$ and~$F_2$ in Section~\ref{sec:multi}
satisfy this condition. The functions~$F_3,F_4$ also satisfy this condition.

\subsection{Proof of Theorem~\ref{thm:main}}

The next result characterizes the optimal solutions of  Problem~\ref{prob:max_global}.

\begin{Theorem}\label{thm:variant_of_main}
Assume that~$F(R_1,R_2)$ is continuous, increasing in~$R_1$ and~$R_2$, and satisfies the additional technical condition. 
    Let~$(\bar\lambda,\bar\eta)$
be an optimal solution of Problem~\ref{prob:max_global}. Denote~$\bar s_1:=\sum_{i=0}^{n_1}\bar\lambda_i$ and~$\bar s_2:=\sum_{j=0}^{n_2}\bar\eta_j$.
Then 
\[
\bar\lambda =  \frac{\bar s_1}{n_1+1} \bar \lambda^s ( n_1)   \text{ and } 
\bar\eta =  \frac{\bar s_2}{n_2+1} \bar \lambda^s ( n_2 ).
\]
\end{Theorem}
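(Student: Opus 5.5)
The plan is a reduction to the single-RFM problem, using two facts stated earlier: the uniqueness of the single-RFM maximizer $\bar\lambda^s(n)$, and the linear scaling of $R$, namely $R(\rho\lambda)=\rho R(\lambda)$ for $\rho>0$. The argument is an exchange (improvement) argument. If one of the two blocks is not the scaled single-RFM optimizer for its own budget, we replace it by that optimizer without changing the budget. This weakly increases $F$, and the technical condition on $F$ then forces a contradiction.

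\textbf{Step 1: the reachable set of production pairs.} For a budget $s\ge 0$ on an RFM of length $n$, scaling gives
\[
\max\Big\{R(\lambda) \suchthat \lambda\ge 0,\ \sum_i\lambda_i\le s\Big\}=\frac{s}{n+1}\,\bar R^s(n)=s\,\bar r^s(n).
\]
Uniqueness of the maximizer of the unscaled problem implies that the unique maximizer is $\frac{s}{n+1}\bar\lambda^s(n)$ whenever $s>0$. For boundary points where some rates vanish, I would use continuity of $R$, with $R=0$ whenever some rate is $0$, as is standard for the RFM. It follows that every feasible $(\lambda,\eta)$ gives a pair $(R_1,R_2)$ with $R_1\le s_1\bar r^s(n_1)$ and $R_2\le s_2\bar r^s(n_2)$, where $s_1+s_2\le N$. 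Hence $(R_1,R_2)\in\Delta_{n_1,n_2}$. Conversely, every point of $\Delta_{n_1,n_2}$ is attained: choose $s_k=R_k/\bar r^s(n_k)$ and use the scaled optimizers. Therefore $\max$ of Problem~\ref{prob:max_global} equals $\max_{\Delta_{n_1,n_2}}F$, and any optimal $(\bar\lambda,\bar\eta)$ maps to a maximizer of $F$ on the triangle.

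\textbf{Step 2: the budget is exhausted and each block is locally optimal.} By the technical condition, the point $(R_1(\bar\lambda),R_2(\bar\eta))$ lies on the hypotenuse:
\[
\frac{R_1(\bar\lambda)}{\bar r^s(n_1)}+\frac{R_2(\bar\eta)}{\bar r^s(n_2)}=N.
\]
Combining this with $R_1(\bar\lambda)\le\bar s_1\bar r^s(n_1)$, $R_2(\bar\eta)\le\bar s_2\bar r^s(n_2)$ and $\bar s_1+\bar s_2\le N$ forces equality everywhere. So $R_1(\bar\lambda)=\bar s_1\bar r^s(n_1)$, meaning $\bar\lambda$ attains the single-RFM maximum for budget $\bar s_1$. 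The same holds for $\bar\eta$, and $\bar s_1+\bar s_2=N$.

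\textbf{Step 3: identify the blocks.} Uniqueness gives $\bar\lambda=\frac{\bar s_1}{n_1+1}\bar\lambda^s(n_1)$ when $\bar s_1>0$. If $\bar s_1=0$, then $\bar\lambda=0$ and the formula holds trivially. The same argument applies to $\bar\eta$.

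\textbf{Main obstacle.} The delicate point is the degenerate case in which some block has $\bar s_k>0$ but $R_k=0$, with some rates zero. There the inequality chain still yields $R_k=\bar s_k\bar r^s(n_k)>0$, which is a contradiction, so this case cannot occur. To make this rigorous, I must make sure that the uniqueness of the maximizer from \cite{min_spring} is stated over the closed set $\lambda\ge0$ rather than only for positive $\lambda$. I expect to handle this by noting that any zero rate gives $R=0<\bar R^s(n)$, so maximizers are automatically positive.
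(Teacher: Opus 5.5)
Your proof is correct and follows the paper's argument. You bound $R_k\le \bar s_k\,\bar r^s(n_k)$ via scaling, which places the optimal pair in $\Delta_{n_1,n_2}$; the technical condition then forces it onto the hypotenuse so all inequalities become equalities; and uniqueness of the single-RFM maximizer finishes. Your explicit check that every point of $\Delta_{n_1,n_2}$ is attainable (so an optimal $(\bar\lambda,\bar\eta)$ really maps to a maximizer of $F$ on the triangle), and your handling of $\bar s_k=0$ and zero rates, fill in details the paper leaves implicit. The ``exchange/contradiction'' framing in your opening paragraph is unnecessary, since the argument you actually carry out is direct.
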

In other words,~$\bar \lambda$ and~$\bar \eta$ are both scaled versions of solutions to the problem of maximizing the production rate in a single RFM, and are thus in particular  turnpike solutions. This  proves   in particular 
Theorem~\ref{thm:main}. 

\begin{proof}[Proof of Theorem~\ref{thm:variant_of_main}]
It follows from the results in~\cite{kaminer2026turnpikepropertyeigenvalueoptimization} that 
\begin{align*}
0&\leq R_1(\bar\lambda) \leq \frac{\bar s_1}{n_1+1} \bar R^s(n_1)
= \bar s_1 \bar r^s ( n_1 ) , \\
0&\leq R_2(\bar\eta) \leq \frac{\bar s_2}{n_2+1} \bar R^s ( n_2)  =
\bar s_2 \bar r^s ( n_2 ) .
\end{align*}
This implies $(R_1(\bar\lambda),R_2(\bar \eta))\in \Delta_{n_1, n_2}$. Using the additional condition on the utility function, it now follows that $(R_1(\bar\lambda)/\bar r^s(n_1)) +(R_2(\bar\eta)/\bar r^s (n_2))=N$, and therefore $R_1(\bar\lambda)=\bar s_1 \bar r^s(n_1)$ and $R_2(\bar\eta)=\bar s_2 \bar r^s (n_2)$. The uniqueness result for the maximizer in a single RFM  in~\cite{kaminer2026turnpikepropertyeigenvalueoptimization} completes the proof.
\end{proof}

\subsection{Budget Allocation Between Transcripts}
The value~$\bar s_i$ is the total budget allocated to RFM number~$i$  in the solution of the optimal control problem. 
 To further analyze this allocation,
 define a function~$f:[0,1]\to \R_{\geq 0}$ by
\be\label{eq:def_small_f}
f(q):=
{F}\left(q N \bar r^s(n_1),
 (1-q) N \bar r^s(n_2)\right ).
\ee
Theorem~\ref{thm:variant_of_main} implies that the problem of maximizing 
$f(q)$ over~$q\in[0,1]$
is equivalent to Problem~\ref{prob:max_global}. A 
 maximizer~$\bar q$ of this problem is in one-to-one correspondence  with a  maximizer 
 of Problem~\ref{prob:max_global} with
 an optimal  budget allocation of~$\bar qN$ [$(1-\bar q)N$] to the first [second] RFM,
  and the maximizing transition rates  are  
\[
(\frac{\bar q N }{n_1+1} \bar \lambda^s(n_1), \frac{(1-\bar q )N}{n_2+1} \bar\lambda^s(n_2) ) 
\]
 yielding the maximal utility~$f(\bar q)$.

 With this information, we can analyze the budget allocation to each RFM for the  utility functions introduced above. For each function~$F_i$ we study the equivalent function~$f_i$ defined by~\eqref{eq:def_small_f}. 
 
 For the function~$F_1$, we get 
 \begin{align*}
f_1(q ) 
& = w_2N\bar r^s(n_2)+ qN( w_1 \bar r^s(n_1)-w_2\bar r^s(n_2) ).
\end{align*}
If~$w_1 \bar r^s(n_1)=w_2\bar r^s(n_2)$
then~$f_1$  is independent of~$q$ implying that any allocation of the total budget~$N$ between the two RFMs is optimal. This is the case for example when~$w_1=w_2$ and~$n_1=n_2$, that is, the utility function gives the same weighting to the two production rates and the two RFMs have equal length.
If~$w_1 \bar r^s(n_1)  > w_2\bar r^s(n_2)$
[$w_1 \bar r^s(n_1)  < w_2\bar r^s(n_2)$]
then~$f_1(q)$ admits a unique optimizer~$\bar q=1$ [$\bar q=0$], that is, the optimal solution is always to allocate all the budget~$N$ to one of the~RFMs. 

If~$w_1=w_2$ we can deduce from the strict monotonicity of $\bar r^s(n)$ displayed in Fig.~\ref{fig:bigsmall_R} that the optimal allocation is to give all the budget to the shorter~RFM and zero budget to the longer one. So far, we do not have a mathematical proof for the strict monotone decrease of $\bar r^s(n)$ with $n$. However, the 
results in~\cite{kaminer2026turnpikepropertyeigenvalueoptimization} imply that 
$\bar r^s(n)=\bar R^s(n)/(n+1)$ satisfies
\be
\frac{1}{4n}<
\bar r^s(n) <  
\frac{1}{4\left(n+1-4\ln(2)\right )}.
\ee  
These bounds imply that if~$n_1  \geq   n_2+2$ then~$\bar r^s(n_1)<\bar r_s(n_2)$  and the unique maximizer allocates all the budget to the shorter~RFM.

For the function~$F_2$, we get 
\begin{align*}
f_2( q )   =w_1 \ln\big (1+ qN\bar r^s(n_1) \big )+w_2\ln \big (1+(1-q) N\bar r^s(n_2)\big ).
\end{align*}
This function is strictly concave in~$q$, so the optimization problem admits a unique solution~$\bar q\in[0,1]$. Calculating~$\frac{\partial}{\partial q}f_2(q)$  and comparing the derivative to zero gives that
\[
 \tilde q 
= \frac{w_1\bar r^s(n_1) ( 
1+N\bar r^s(n_2)) -w_2 \bar r^s(n_2)
 }
{(w_1+w_2)N\bar r^s(n_1)\bar r^s(n_2)}.
\]
 If~$\tilde q$ lies in the interval $[0,1]$ then the maximizer is given by $\bar q = \tilde q$. If~$\tilde q <0 \; [\tilde q > 1]$ we have~$\bar q = 0 \;[\bar q = 1]$.

For the function~$F_3$, we get 

\begin{align*}
f_3(q ) 
=q(1-q) N^2\bar r^s(n_1)\bar r^s(n_2) , 
\end{align*}
and the unique maximizer is~$\bar q=1/2$. Thus, the optimal solution is always to split the total budget equally between the two~RFMs (see the caption in Figure~\ref{fig:5040}).

For the function~$F_4$, we get
\begin{align*}
f_4( q) =N\min\{   q \bar r^s(n_1),(1-q) \bar r^s(n_2)  \},
\end{align*}
and this admits a unique maximizer 
\[
\bar q= \frac{\bar r^s(n_2)} { \bar r^s(n_1)+\bar r^s(n_2) } .
\]

\section{Appendix A: Extension to the  case of~$m$ transcripts}\label{TAppendix}
For the sake of simplicity,    we stated all the results above for the particular case of $m=2$ RFMs. We now explain how these results carry over to the case with a  general number~$m$ of RFMs. 

Let~$n_k>1$ be the length of the~$k$th RFM, and denote its vector of
non-negative
transition rates by~$\lambda^k=\begin{bmatrix}
    \lambda_0^k&\lambda_1^k&\dots&\lambda_{n_k}^k \end{bmatrix}$, and its steady state production rate by~$R_k(\lambda^k)$. The utility function is~$F(R_1(\lambda^1),\dots,R_m(\lambda^m))$, and
    the optimization problem is
\[
\max F(R_1(\lambda^1),\dots,R_m(\lambda^m))
\]
subject to~$\lambda^k_i\geq 0$ for all~$i,k$ and
\[
\sum_{k=1}^m 
\sum_{i=0}^{n_k} \lambda^k_i\leq \sum_{k=1}^m (n_k+1).
 \]

We assume that the  utility function~$F:\R^m_{\geq 0} \to \R$ is continuous, increasing in every~$R_i$, and satisfies an additional technical condition as follows. 

Let
\[
N:=\sum_{k=1}^m(n_k+1)
\]
denote the total budget.  
Define the solid triangle:  
\[
\Delta_{n_1, \dots,n_m} := \{R \in [0, \infty)^m 
\suchthat \sum_{k=1}^m \frac{R_k}{\bar r^s(n_k)} \leq N\} , 
\]
where~$\bar r^s(n)$ is defined in~\eqref{eq:small_r}.

The additional technical condition on the utility function~$F$ is that all maximizers of~$F$ restricted to this  are located on the upper boundary $
\Delta_{n_1, \dots,n_m} := \{R \in [0, \infty)^m 
\suchthat \sum_{k=1}^m \frac{R_k}{\bar r^s(n_k)} = N\} $. 
Intuitively speaking, 
this means that an
optimal solution always uses all the cellular budget~$N$. 
If~$F(R_1,\dots,R_m)$ is strictly increasing  in every~$R_i$ then this condition  automatically holds. 

Utility functions that satisfy all the required conditions include the following
\begin{itemize}
\item 
$F_1(R_1(\lambda^1),\dots,R_m(\lambda^m)):=
\sum_{k=1}^m w_k  R_k(\lambda^k)$,
where the~$w_k$s are positive weights.   
\item $F_2(R_1(\lambda^1),\dots,R_m(\lambda^m)) : =\sum_{k=1}^m w_k \ln(1+R_k(\lambda^k)) $, 
where the~$w_k$s are positive weights.
\item 
$F_3(R_1(\lambda^1),\dots,R_m(\lambda^m))
:=\prod_{k=1}^m R_k(\lambda^k)$. 
\item $F_4(R_1(\lambda^1),\dots,R_m(\lambda^m)) : =\min\{R_1(\lambda^1),
\dots,R_m(\lambda^m)
\}$. 
\end{itemize}

Since~$F$ is continuous 
and the feasible set of parameters is closed and bounded,  
an optimal solution~$ \bar \lambda^1,\dots,\bar \lambda^m$ 
exists.
The next result shows in particular
that every solution of the optimization problem with~$m$ RFMs 
admits a multi-turnpike structure.

\begin{Theorem}\label{thm:m_rfm_s_main}
Assume that~$F$ is continuous, increasing in every~$R_k$, and satisfies the additional technical condition. 
    Let~$ \bar\lambda^1,\dots,\bar\lambda^m $
be an optimal solution. Denote~$\bar s_k:=\sum_{i=0}^{n_k}\bar\lambda^k_i$,
$k=1,\dots,m$. 
Then 
\[
\bar\lambda^k =  \frac{\bar s_k}{n_k+1} \bar \lambda^s ( n_k)  ,\quad k=1,\dots,m.
\]
\end{Theorem}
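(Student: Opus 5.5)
The plan mirrors the proof of Theorem~\ref{thm:variant_of_main}, with the two-dimensional triangle replaced by the simplex $\Delta_{n_1,\dots,n_m}$. Throughout we use two facts about a single RFM. First, homogeneity: $e(\rho\lambda)=e(\lambda)$, hence $R(\rho\lambda)=\rho R(\lambda)$ for $\rho>0$. Second, the existence and uniqueness of the maximizer $\bar\lambda^s(n)$ for the budget $n+1$~\cite{min_spring,kaminer2026turnpikepropertyeigenvalueoptimization}.

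\emph{Step 1 (per-transcript upper bound).} Fix $k$ and consider $\bar\lambda^k$ with sum $\bar s_k$. If $\bar s_k>0$, the rescaled vector $\frac{n_k+1}{\bar s_k}\bar\lambda^k$ is feasible for the single-RFM problem with budget $n_k+1$. Homogeneity then gives
\[
R_k(\bar\lambda^k)\le \frac{\bar s_k}{n_k+1}\bar R^s(n_k)=\bar s_k\,\bar r^s(n_k).
\]
If $\bar s_k=0$, all rates vanish and $R_k=0$, so the bound holds trivially. Some care is needed because the RFM is defined for positive rates while the feasible set allows zeros. I would extend $R$ continuously to the closed orthant, or equivalently note that the steady-state flow is $0$ whenever some rate is $0$. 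With that convention the bound above persists by continuity.

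\emph{Step 2 (landing on the boundary).} Summing the bounds from Step~1 gives
\[
\sum_k R_k(\bar\lambda^k)/\bar r^s(n_k)\le \sum_k \bar s_k\le N,
\]
so $(R_1(\bar\lambda^1),\dots,R_m(\bar\lambda^m))\in\Delta_{n_1,\dots,n_m}$. Next we show this point maximizes $F$ over the whole simplex. Given any $R\in\Delta_{n_1,\dots,n_m}$, take $s_k:=R_k/\bar r^s(n_k)$ and $\lambda^k:=\frac{s_k}{n_k+1}\bar\lambda^s(n_k)$. This choice is feasible and realizes exactly $R_k(\lambda^k)=R_k$, so $F$ over the simplex is bounded by the optimal value of the problem. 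Hence the optimal rate vector is a maximizer of $F$ on $\Delta_{n_1,\dots,n_m}$. The technical condition then forces
\[
\sum_k R_k(\bar\lambda^k)/\bar r^s(n_k)=N.
\]
Combining this with the chain of inequalities above, every inequality must be an equality. Since each term satisfies $R_k/\bar r^s(n_k)\le \bar s_k$ individually, equality of the sums forces $R_k(\bar\lambda^k)=\bar s_k\bar r^s(n_k)$ for every $k$, and also $\sum_k\bar s_k=N$.

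\emph{Step 3 (uniqueness).} For $\bar s_k>0$, the rescaled vector $\frac{n_k+1}{\bar s_k}\bar\lambda^k$ attains $\bar R^s(n_k)$ under the budget $n_k+1$. By uniqueness of the single-RFM maximizer it equals $\bar\lambda^s(n_k)$, which gives the claimed formula. For $\bar s_k=0$ the formula holds trivially because both sides are zero.

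The main obstacle is Step~2: one must argue correctly that the optimal point maximizes $F$ over the full simplex, which needs the realizability construction, and then that the equality of sums propagates to each $k$. A secondary point is the boundary behaviour of $R$ at zero rates, which is handled by the convention in Step~1. The remaining steps are routine.
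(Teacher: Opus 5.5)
Your proof is correct and follows the paper's route. The paper gives the argument for $m=2$ and says it carries over verbatim to general $m$: the per-transcript bound $R_k(\bar\lambda^k)\le \bar s_k\bar r^s(n_k)$, membership in $\Delta_{n_1,\dots,n_m}$, the technical condition forcing the upper boundary, termwise equality, and uniqueness of the single-RFM maximizer. Your realizability construction in Step~2 makes explicit a step the paper leaves implicit, namely that $(R_1(\bar\lambda^1),\dots,R_m(\bar\lambda^m))$ maximizes $F$ over the simplex so the technical condition applies; your zero-rate convention likewise handles a boundary point the paper does not address.
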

In other words,
every~$\bar \lambda^k$
is a scaled version of a solution to the problem of maximizing the production rate in a single RFM, and is thus a turnpike solution. 

The proof of this result is very similar to the proof of Theorem~\ref{thm:variant_of_main} and is thus omitted. 

Note that 
Theorem~\ref{thm:m_rfm_s_main} allows to reduce the problem of optimizing $F(R_1(\lambda^1),\dots,R_m(\lambda^m))$ to the simpler task of
determining the maximizers~$\bar q$ of the function $f(q):={F}\left(q_1 N \bar r^s(n_1), \ldots, 
 q_m N \bar r^s(n_m)\right )$ on the $(m-1)$-dimensional polytope $\Delta := \{q \in [0, 1]^m 
\suchthat \sum_{k=1}^m q_k = 1\}$. For each maximizer~$\bar q$  the corresponding
optimal rates are then given by 
\[
\bar\lambda^k =  \frac{\bar q_k N}{n_k+1} \bar \lambda^s ( n_k), \quad k=1,\dots,m.
\]

\section*{
Acknowledgments and funding sources }
 The research of~MM is partly supported by a  research grant from the Israeli Science Foundation~(ISF). The research of MM, TK, and LG was supported by DFG Grant No.\ 470999742.    
The research of~TT is partially supported by a research grant from the Israeli  
Ministry of Innovation, Science and Technology~(MOST) and  a HORIZON-EIC-2025 Pathfinder grant.

\section*{Conflict of interest statement}
The authors declare that they have no competing interests.

\section*{Data sharing}
All numerical results and figures were generated using MATLAB routines developed for this study. The~MATLAB code   is available from the corresponding author upon  request.

\section*{Disclosure of Delegation to Generative AI}
The authors declare the use of generative AI in the research and writing process. According to the GAIDeT taxonomy (2025), the following tasks were delegated to GAI tools under full human supervision:

- Proofreading and editing

\noindent The GAI tool used was: ChatGpt.
Responsibility for the final manuscript lies entirely with the authors.


\begin{thebibliography}{10}
\providecommand{\url}[1]{#1}
\csname url@samestyle\endcsname
\providecommand{\newblock}{\relax}
\providecommand{\bibinfo}[2]{#2}
\providecommand{\BIBentrySTDinterwordspacing}{\spaceskip=0pt\relax}
\providecommand{\BIBentryALTinterwordstretchfactor}{4}
\providecommand{\BIBentryALTinterwordspacing}{\spaceskip=\fontdimen2\font plus
\BIBentryALTinterwordstretchfactor\fontdimen3\font minus \fontdimen4\font\relax}
\providecommand{\BIBforeignlanguage}[2]{{%
\expandafter\ifx\csname l@#1\endcsname\relax
\typeout{** WARNING: IEEEtranS.bst: No hyphenation pattern has been}%
\typeout{** loaded for the language `#1'. Using the pattern for}%
\typeout{** the default language instead.}%
\else
\language=\csname l@#1\endcsname
\fi
#2}}
\providecommand{\BIBdecl}{\relax}
\BIBdecl

\bibitem{Bahiri2021}
S.~Bahiri~Elitzur, R.~Cohen-Kupiec, D.~Yacobi, L.~Fine, B.~Apt, A.~Diament, and T.~Tuller, ``Prokaryotic {rRNA-mRNA} interactions are involved in all translation steps and shape bacterial transcripts,'' \emph{RNA Biol.}, vol.~18, no. sup2, pp. 684--698, 2021.

\bibitem{total_mrna_PNAS2024}
L.~Calabrese, L.~Ciandrini, and M.~C. Lagomarsino, ``How total {mRNA} influences cell growth,'' \emph{Proceedings of the National Academy of Sciences}, vol. 121, no.~21, p. e2400679121, 2024.

\bibitem{Cohen2018}
E.~Cohen, Z.~Zafrir, and T.~Tuller, ``A code for transcription elongation speed,'' \emph{RNA Biol.}, vol.~15, no.~1, pp. 81--94, 2018.

\bibitem{DoSS58}
R.~Dorfman, P.~A. Samuelson, and R.~M. Solow, \emph{Linear programming and economic analysis}, ser. A Rand Corporation Research Study.\hskip 1em plus 0.5em minus 0.4em\relax New York-Toronto-London: McGraw-Hill, 1958.

\bibitem{Edri2014}
S.~Edri, E.~Gazit, E.~Cohen, and T.~Tuller, ``The {RNA} polymerase flow model of gene transcription,'' \emph{IEEE Trans Biomed Circuits Syst.}, vol.~8, no.~1, pp. 54--64, 2014.

\bibitem{RFM_NEGATIVE_FEEDBACK}
A.~Ehrman, T.~Kriecherbauer, L.~Gr\"une, and M.~Margaliot, ``Negative feedback and oscillations in a model for {mRNA} translation,'' \emph{J. Royal Society Interface}, vol.~22, p. 20250338, 2025.

\bibitem{FauG22}
T.~Faulwasser and L.~Gr\"une, ``Turnpike properties in optimal control,'' in \emph{Numerical Control: Part A}, E.~Tr\'elat and E.~Zuazua, Eds.\hskip 1em plus 0.5em minus 0.4em\relax Elsevier, 2022, vol.~24, pp. 367--400.

\bibitem{pilpel2017}
I.~Frumkin, D.~Schirman, A.~Rotman, F.~Li, L.~Zahavi, E.~Mordret, O.~Asraf, S.~Wu, S.~F. Levy, and Y.~Pilpel, ``Gene architectures that minimize cost of gene expression,'' \emph{Molecular Cell}, vol.~65, no.~1, pp. 142--153, 2017.

\bibitem{num_mrna_2014}
S.~Islam, A.~Zeisel, S.~Joost, G.~L. Manno, P.~Zajac, M.~Kasper, P.~Lönnerberg, and S.~Linnarsso, ``Quantitative single-cell {RNA}-seq with unique molecular identifiers,'' \emph{Nature Methods}, vol.~2, pp. 163--166, 2014.

\bibitem{aditi_networks}
A.~Jain, M.~Margaliot, and A.~K. Gupta, ``Large-scale {mRNA} translation and the intricate effects of competition for the finite pool of ribosomes,'' \emph{J. R. Soc. Interface}, vol.~19, p. 2022.0033, 2022.

\bibitem{Jia_2026}
\BIBentryALTinterwordspacing
L.~Jia, Y.~Mao, S.~Uematsu, X.~A. Liu, L.~Dong, L.~H. França~de Lima, and S.-B. Qian, ``Profiling of terminating ribosomes reveals translational control at stop codons,'' 2026. [Online]. Available: \url{http://dx.doi.org/10.7554/eLife.109257.2}
\BIBentrySTDinterwordspacing

\bibitem{kaminer2026turnpikepropertyeigenvalueoptimization}
\BIBentryALTinterwordspacing
A.~Kaminer, T.~Kriecherbauer, L.~Grüne, and M.~Margaliot, ``A turnpike property in an eigenvalue optimization problem,'' 2026, submitted. [Online]. Available: \url{https://arxiv.org/abs/2601.13756}
\BIBentrySTDinterwordspacing

\bibitem{fierce_compete}
R.~Katz, E.~Attias, T.~Tuller, and M.~Margaliot, ``Translation in the cell under fierce competition for shared resources: a mathematical model,'' \emph{J. Royal Society Interface}, vol.~19, p. 20220535, 2022.

\bibitem{Margaliot2016ContractionSmallTransients}
M.~Margaliot, E.~D. Sontag, and T.~Tuller, ``Contraction after small transients,'' \emph{Automatica}, vol.~67, pp. 178--184, 2016.

\bibitem{margaliot2012stability}
M.~Margaliot and T.~Tuller, ``Stability analysis of the ribosome flow model,'' \emph{IEEE/ACM Trans. Computational Biology and Bioinformatics}, vol.~9, no.~5, pp. 1545--1552, 2012.

\bibitem{Peeri2020}
M.~Peeri and T.~Tuller, ``High-resolution modeling of the selection on local {mRNA} folding strength in coding sequences across the tree of life,'' \emph{Genome Biol.}, vol.~21, no.~1, p.~63, 2020.

\bibitem{rfm_max}
G.~Poker, Y.~Zarai, M.~Margaliot, and T.~Tuller, ``Maximizing protein translation rate in the nonhomogeneous ribosome flow model: A convex optimization approach,'' \emph{J. Royal Society Interface}, vol.~11, no. 100, p. 20140713, 2014.

\bibitem{rfm_sense}
G.~Poker, M.~Margaliot, and T.~Tuller, ``Sensitivity of {mRNA} translation,'' \emph{Sci. Rep.}, vol.~5, no. 12795, 2015.

\bibitem{Rams28}
F.~P. Ramsey, ``A mathematical theory of saving,'' \emph{The Economic Journal}, vol.~38, no. 152, pp. 543--559, 1928.

\bibitem{Raveh2016}
A.~Raveh, M.~Margaliot, E.~Sontag, and T.~Tuller, ``A model for competition for ribosomes in the cell,'' \emph{J. Royal Society Interface}, vol.~13, no. 116, p. 20151062, 2016.

\bibitem{reuveni2011genome}
S.~Reuveni, I.~Meilijson, M.~Kupiec, E.~Ruppin, and T.~Tuller, ``Genome-scale analysis of translation elongation with a ribosome flow model,'' \emph{PLOS Computational Biology}, vol.~7, no.~9, p. e1002127, 2011.

\bibitem{Roux2012}
P.~P. Roux and I.~Topisirovic, ``Regulation of {mRNA} translation by signaling pathways,'' \emph{Cold Spring Harb Perspect Biol.}, vol.~4, no.~11, p. a012252, 2012.

\bibitem{Sabi2019}
R.~Sabi and T.~Tuller, ``Modelling and measuring intracellular competition for finite resources during gene expression,'' \emph{J. R. Soc. Interface}, vol.~16, no. 154, p. 20180887, 2019.

\bibitem{resource-aware2024}
K.~Sechkar, H.~Steel, G.~Perrino, and G.-B. Stan, ``A coarse-grained bacterial cell model for resource-aware analysis and design of synthetic gene circuits,'' \emph{Nature Comm.}, vol.~15, p. 1981, 2024.

\bibitem{TEODOROWICZ2026169765}
W.~Teodorowicz and O.~Mühlemann, ``Molecular determinants and therapeutic targeting of stop codon readthrough in eukaryotic translation,'' \emph{J. Molecular Biology}, p. 169765, 2026.

\bibitem{TEUSINK2026103391}
B.~Teusink, P.~Grigaitis, M.~Remeijer, F.~Bruggeman, and R.~Steuer, ``Resource allocation models: theory and applications in microbial biotechnology,'' \emph{Current Opinion in Biotechnology}, vol.~97, p. 103391, 2026.

\bibitem{TRELAT201581}
E.~Trélat and E.~Zuazua, ``The turnpike property in finite-dimensional nonlinear optimal control,'' \emph{J. Diff. Eqns.}, vol. 258, no.~1, pp. 81--114, 2015.

\bibitem{TULLER2010344}
T.~Tuller, A.~Carmi, K.~Vestsigian, S.~Navon, Y.~Dorfan, J.~Zaborske, T.~Pan, O.~Dahan, I.~Furman, and Y.~Pilpel, ``An evolutionarily conserved mechanism for controlling the efficiency of protein translation,'' \emph{Cell}, vol. 141, no.~2, pp. 344--354, 2010.

\bibitem{TULLER2011}
T.~Tuller, I.~Veksler-Lublinsky, N.~Gazit, M.~Kupiec, E.~Ruppin, and M.~Ziv-Ukelson, ``Composite effects of gene determinants on the translation speed and density of ribosomes,'' \emph{Genome Biol.}, vol.~12, no.~11, p. R110, 2011.

\bibitem{NAR2015}
T.~Tuller and H.~Zur, ``Multiple roles of the coding sequence 5' end in gene expression regulation,'' \emph{Nucleic Acids Res.}, vol.~43, no.~1, pp. 13--28, 2015.

\bibitem{min_spring}
Y.~Zarai and M.~Margaliot, ``On minimizing the maximal characteristic frequency of a linear chain,'' \emph{IEEE Trans. Autom. Control}, vol.~62, no.~9, pp. 4827--4833, 2017.

\bibitem{Zeng2021CellularResourceAllocation}
H.~Zeng, R.~Rohani, W.~E. Huang, and A.~Yang, ``Understanding and mathematical modelling of cellular resource allocation in microorganisms: A comparative synthesis,'' \emph{BMC Bioinformatics}, vol.~22, p. 467, 2021.

\end{thebibliography}
\end{document}